\documentclass[11pt]{article}
\usepackage{amsfonts}
\usepackage{mathrsfs}
\usepackage{amssymb}
\usepackage{stmaryrd}
\usepackage{amsmath}
\usepackage{graphicx}
\usepackage{subfig}
\usepackage{color}
\usepackage{ifpdf}
\usepackage{arydshln}
\usepackage{booktabs}
\usepackage{multirow}

\newtheorem{assumption}{Assumption}
\newtheorem{remark}{Remark}
\newtheorem{theorem}{Theorem}
\newtheorem{lemma}{Lemma}

\newtheorem{definition}{Definition}

\newenvironment{proof}{\makebox[7ex][l]{\it Proof:\/}}{\hfill\/ \hfill\/ {\it
Q.E.D.} \vspace{0.5ex}\\}

\begin{document}

\title{\LARGE \bf
State Feedback Control of State-Delayed LPV Systems \\
using Dynamic IQCs
}

\author{Fen Wu\thanks{Email: {\tt fwu@ncsu.edu}, Phone: (919) 515-5268}
\vspace*{0.1in} \\
Department of Mechanical and Aerospace Engineering \\
North Carolina State University \\
Raleigh, NC 27695, USA}

\date{}

\maketitle

\begin{abstract}
This paper develops a new control framework for linear parameter-varying (LPV) systems with time-varying state delays by integrating parameter-dependent Lyapunov functions with integral quadratic constraints (IQCs). A novel delay-dependent state-feedback controller structure is proposed, consisting of a linear state-feedback law augmented with an additional term that captures the delay-dependent dynamics of the plant. Closed-loop stability and $\mathcal{L}_2$-gain performance are analyzed using dynamic IQCs and parameter-dependent quadratic Lyapunov functions, leading to convex synthesis conditions that guarantee performance in terms of parameter-dependent linear matrix inequalities (LMIs). Unlike traditional delay control approaches, the proposed IQC-based framework provides a flexible and systematic methodology for handling delay effects, enabling enhanced control capability, reduced conservatism, and improved closed-loop performance.
\end{abstract}

{\bf Keywords:}
linear parameter-varying (LPV) systems;
time-varying state delay; integral quadratic constraints (IQCs);
linear matrix inequalities (LMIs).


\section{Introduction}
\label{Sec.Int}

Time-delay systems have received considerable attention over the past decades due to their widespread presence in practical engineering applications, including industrial manufacturing processes \cite{LiuGQYFC.TIE14}, neural networks \cite{LiGY.TCYB11, WanGQ.TNNLS15, KarG.TCYB10}, and fuzzy dynamical systems \cite{QiuFY.TFS09, DonWL.TCYB12, ZhaNS.TC14}. The presence of delays often induces instability and performance degradation, which has motivated the development of various tools for stability analysis and controller design \cite{GuKC.B03, Fri.B14, Bri.B15}.

Early research primarily focused on systems with constant delays, and the theory for stability and stabilization of linear time-invariant (LTI) systems with constant delays is now well established \cite{NicG.B04, Krs.CSM10}. More recently, time-varying delays have attracted increasing attention, as they naturally arise in large-scale complex systems such as networked control systems \cite{LiuGQYFC.TIE14, LiGY.TCYB11, WanGQ.TNNLS15, KarG.TCYB10,YuaW.TCNS17} and underwater vehicles \cite{MaoW.JVC10}. This has led to extensive investigations on the analysis and synthesis of systems with time-varying delays \cite{Fri.B14, Par.TAC99, KaoR.Au07, BriSL.IFAC08}, including the development of various stability and performance criteria \cite{Par.TAC99}. A common approach in these studies is the construction of Lyapunov-Krasovskii functionals (LKFs) to obtain delay-dependent analysis results. Although more sophisticated LKFs often reduce conservatism, identifying the precise sources of conservatism remains challenging, as does designing functionals that achieve a desirable balance between accuracy and computational efficiency.

A further difficulty associated with LKF-based approaches arises in control synthesis. Most synthesis conditions derived from LKFs are inherently non-convex, frequently taking the form of bilinear matrix inequalities (BMIs) due to the coupling between Lyapunov and controller decision matrices \cite{Fri.B14, Bri.B15, BriSL.IFAC08}. To obtain computationally tractable conditions, auxiliary variables or structural constraints are typically introduced, which inevitably introduce additional conservatism. Consequently, a persistent gap exists between analysis and synthesis results, explaining why synthesis conditions are often more conservative despite the rich body of analysis results.
In the context of LPV systems, \cite{WuG.Au01,Wu2001} presents an analysis and state-feedback synthesis framework for LPV systems with parameter-dependent delays. Stability and induced $\mathcal{L}_2$ performance conditions are formulated as LMIs using parameter-dependent Lyapunov–Krasovskii functionals, enabling convex computation.
Subsequently, \cite{BriSL.SCL10} proposes $\delta$-memory-resilient gain-scheduled state-feedback controllers for uncertain LTI/LPV systems, unifying memoryless and exact-memory controllers while explicitly accounting for mismatch between system and controller delays.

Integral quadratic constraints (IQCs), introduced by Megretski and Rantzer \cite{MegR.TAC97}, provide an alternative framework for modeling nonlinearities and uncertainties, including saturation, dead zones, delays, parametric uncertainties, and unmodeled dynamics. IQCs have been successfully applied to stability analysis and stabilization of uncertain dynamical systems \cite{Bri.B15, KaoR.Au07, Kao.TAC12, VeeS.Au14, VeeS.IJRNC14}. For time-delay systems, IQC-based methods offer several advantages over LKF-based approaches by explicitly characterizing the input-output behavior of delayed dynamics. This representation reveals the sources of conservatism more transparently and provides systematic mechanisms for reducing it. Libraries of IQC multipliers for continuous- and discrete-time LTI systems with time-varying delays have been developed \cite{KaoR.Au07, Kao.TAC12}, while alternative formulations include static IQCs, input-output methods \cite{FriS.SCL06}, and quadratic separation techniques \cite{AriG.CDC09}.

Despite these advances, nearly all existing IQC-based results for delayed systems focus primarily on stability and performance analysis, whereas control synthesis within the IQC framework has not been adequately explored except for \cite{VeeS.IJRNC14}. For example,  \cite{PfiS.IJRNC15} presents a robust LPV analysis framework that combines LPV system modeling with uncertainties described by integral quadratic constraints (IQCs)for delayed systems, providing computationally efficient conditions for assessing robust performance. The approach generalizes the nominal LPV bounded real lemma to systems whose state matrices have arbitrary dependence on time-varying parameters, extending applicability beyond prior rational-dependence methods. It also shows significant performance improvement using dynamic IQC with parameter dependent Lyapunov functions. This observation motivates the present work, which exploits the IQC methodology to address the delay control synthesis problem with both stringent performance guarantees and computational efficiency. Recent studies such as \cite{YuaW.TCYB16} introduce a dynamic IQC-based exact-memory control framework for uncertain linear systems with time-varying state delays, embedding the delay operator directly into the controller and yielding fully convex ${\cal H}_\infty$ synthesis conditions with reduced conservatism. Similarly, \cite{YuaW.Au17} develops a dynamic IQC-based exact-memory output-feedback control scheme for linear systems with time-varying input delays, resulting in fully convex LMI conditions for both memoryless and exact-memory controllers. These works highlight the potential of IQC-based frameworks to bridge the longstanding gap between analysis and synthesis.

This paper develops a new control framework for LPV systems with time-varying state delays by integrating parameter-dependent Lyapunov functions with integral quadratic constraints (IQCs). Building upon our prior work on IQC-based analysis and control of uncertain linear systems with state delays and LTI systems with input delays \cite{YuaW.TCYB16,YuaW.Au17}, the proposed approach extends these techniques to the LPV setting, where parameter variations and delay effects interact in a fundamentally nontrivial manner. Specifically, we investigate the state-feedback delay control synthesis problem for LPV systems with time-varying state delays under the IQC framework. A novel delay-dependent controller structure is proposed, consisting of a linear state-feedback law augmented with an additional term that explicitly captures the delay-dependent dynamics of the plant. Under this control scheme, we apply a bounded real lemma  to characterize closed-loop stability and $\mathcal{L}_2$-gain performance using dynamic IQCs and parameter-dependent quadratic Lyapunov functions. Consequently, convex synthesis conditions guaranteeing a prescribed $\mathcal{L}_2$-gain performance are derived in terms of parameter-dependent linear matrix inequalities (LMIs).

Unlike traditional delay analysis and control approaches based on Lyapunov–Krasovskii functionals, the proposed IQC-based framework offers a fundamentally different perspective for handling delay effects.
By leveraging the flexibility of IQCs together with parameter-dependent Lyapunov functions, the method enables enhanced control capability, reduced conservatism, and improved closed-loop performance.
By providing a systematic and powerful approach for both analysis and synthesis of time-delay LPV systems via convex optimization, the resulting framework offers a compelling alternative to conventional delay control methods and paves the way for more advanced designs.

It is worth noting that the proposed results are built upon the well-established robust stability analysis framework using IQCs and dissipation inequalities \cite{Sei.TAC14}. Alternative robust stability conditions for IQC synthesis exist based on the Kalman–Yakubovich–Popov (KYP) lemma (see, e.g., \cite{MegR.TAC97, VeeS.Au14, VeeS.IJRNC14}), which could potentially support complementary delay control synthesis approaches.
However, compared with the KYP-based methods, the dissipation inequality framework offers several distinctive advantages for IQC-based robust analysis and synthesis.
A detailed comparison between these approaches is beyond the scope of this paper.

{\bf Notation}.
$\mathbf{R}$ and $\mathbf{C}$ denote the sets of real and complex numbers respectively.
$\mathbf{R}^{m \times n}$ ($\mathbf{C}^{m \times n}$) represents the set of real (complex) $m \times n$ matrices, while $\mathbf{R}^{n}$ ($\mathbf{C}^n$) denotes the set of real (complex) $n \times 1$ vectors.
The identity matrix of dimension $n$ is denoted by $I_n$.
$\mathbf{S}^{n}$ and $\mathbf{S}^{n}_+$ denote the sets of real symmetric $n\times n$ matrices and positive definite matrices, respectively.
A block diagonal matrix with diagonal blocks $X_1, X_2, \cdots, X_p$ is denoted by ${\it diag}\{X_1,  X_2, \cdots, X_p\}$.
The symbol $\star$ is used in LMIs to indicate entries implied by symmetry.
For two integers $k_1 < k_2$, ${\bf I}[k_1, k_2] := \{k_1, k_1+1, \cdots, k_2\}$.
For $s \in \mathbf{C}$, $\bar{s}$ denotes the complex conjugate of $s$.
For a matrix $M \in \mathbf{C}^{m\times n}$, $M^T$ denotes the  transpose and $M^*$ denotes the complex conjugate transpose.
$\mathbf{RL}_\infty$ denotes proper rational functions with real coefficients and no poles on the imaginary axis, and
$\mathbf{RH}_\infty$ denotes the subset of $\mathbf{RL}_\infty$ consisting of functions analytic in the closed right half-plane.
$\mathbf{RL}_\infty^{m \times n}$ and $\mathbf{RH}_\infty^{m \times n}$ denote the corresponding sets of transfer matrices.
For $G \in \mathbf{RL}_\infty^{m \times n}$, the para-Hermitian conjugate is defined as $G^{\thicksim}(s):=G(-\bar{s})^*$.
For $x \in \mathbf{C}^{n}$, the Euclidean norm is defined by
$\|x\| := (x^* x)^{1/2}$.
$L_{2+}^n$ denotes the space of square-integrable functions $u: [0,\infty) \rightarrow \mathbf{R}^n$ equipped with the norm  $\|u\|_2 := \left(\int^\infty_0 u^T(t) u(t) dt \right)^{1/2} < \infty$.
For $u \in L_{2+}^n$, $u_T$ denotes the truncated signal defined by $u_T(t) = u(t), \forall t \leq T$ and $u_T(t) = 0$ otherwise.
The extended space $L_{2e+}$, consists of functions $u$ such that $u_T \in L_{2+}$ for all $T \geq 0$.

The remainder of the paper is organized as follows.
Section \ref{Sec.Prelim} provides a brief review of IQCs.
Section \ref{Sec.Pre} presents the problem statement and the bounded real lemma for robust stability analysis of state-delayed systems using dynamic IQCs.
The corresponding delay control synthesis conditions are derived in Section \ref{Sec.Syn}.
Numerical studies illustrating the effectiveness and advantages of the proposed control design approach are presented in Section \ref{Sec.Ex}.
Finally, conclusions are drawn in Section \ref{Sec.Con}.

\section{Preliminaries}
\label{Sec.Prelim}

In this section, we briefly review several fundamental concepts related to integral quadratic constraints (IQCs), which form the basis for the subsequent developments of this paper.

\begin{definition}[\cite{Sei.TAC14}]
\label{Def.IQC}
Let $\Pi \in \mathbf{RL}_\infty^{(m_1+m_2) \times (m_1+m_2)}$ be a proper rational function, referred to as a multiplier, with factorization $\Pi = \Psi^{\thicksim} W \Psi$, where $W \in \mathbf{S}^{n_z}$ and $\Psi \in \mathbf{RH}_\infty^{n_z \times (m_1+m_2)}$.
Two signals $v\in L_{2e+}^{n_v}$ and $w\in L_{2e+}^{n_w}$ are said to satisfy the IQC defined by $\Pi$ and $(\Psi, W)$ is called a hard IQC factorization of $\Pi$, if the inequality  holds
\begin{align}
\label{Def-hardIQC-ine}
\int^T_0 z^T(t) W z(t) dt \geq 0
\end{align}
holds for all $T \geq 0$, where
$z\in\mathbf{R}^{n_z}$ denotes the filtered output of $\Psi$ driven by inputs $(v,w)$ with zero initial conditions, i.e.,  $z = \Psi\begin{bmatrix} v\\ w \end{bmatrix}$.
Moreover, a bounded causal operator $\mathcal{S}: L_{2e+}^{n_v} \rightarrow L_{2e+}^{n_w}$ satisfies the IQC defined by $\Pi$ if condition (\ref{Def-hardIQC-ine}) holds all admissible signals $v$, with $w = \mathcal{S}(v)$, and for all $T\geq 0$.
\end{definition}

The factorization $\Pi = \Psi^{\thicksim} W \Psi$ is generally not unique but can be computed using standard state-space techniques \cite{SchW.B00}.
Furthermore, a broad class of multipliers admits hard factorizations \cite{MegR.TAC97}.
Additional discussions on hard IQCs can be found in \cite{Sei.TAC14, PfiS.IJRNC15}.
The concept of hard IQCs, together with the following spectral factorization definition and lemma \cite{Sei.TAC14}, plays a central role in the dissipation inequality framework adopted in this work.

\begin{definition}
\label{Def-Jspectral}
A pair $(\Psi, W)$ is called a $J_{m_1,m_2}$-spectral factorization of $\Pi = \Pi^{\thicksim} \in \mathbf{RL}_\infty^{(m_1+m_2) \times (m_1+m_2)}$ if
\[
\Pi = \Psi^\thicksim W \Psi, \qquad W = \begin{bmatrix} X_{1} & 0 \\ 0 & -X_{2} \end{bmatrix},
\]
where $\Psi, \Psi^{-1} \in \mathbf{RH}_\infty^{(m_1+m_2) \times (m_1+m_2)}$, and $X_1\in \mathbf{S}_+^{m_1}$, $X_2\in \mathbf{S}_+^{m_2}$.
\end{definition}

Under a $J_{m_1,m_2}$-spectral factorization, $\Psi$ is square, stable, and minimum phase.

\begin{lemma}[\cite{Sei.TAC14}]
\label{Lemma-Jfactor}
Let $\Pi = \Pi^{\thicksim} \in \mathbf{RL}_\infty^{(m_1+m_2) \times (m_1+m_2)}$ be partitioned as
\[
\Pi = \begin{bmatrix} \Pi_{11} & \Pi_{12} \\ \Pi_{12}^{\thicksim} & \Pi_{22}\end{bmatrix},
\]
where $\Pi_{11} \in \mathbf{RL}_\infty^{m_1 \times m_1}$ and $\Pi_{22} \in \mathbf{RL}_\infty^{m_2 \times m_2}$.
If $\Pi_{11}(j \omega) > 0$ and $\Pi_{22}(j \omega) < 0$ for all $\omega \in \mathbf{R} \cup \{\infty\}$,
then $\Pi$ admits a $J_{m_1,m_2}$-spectral factorization $(\Psi,W)$, also constitutes a hard factorization.
\end{lemma}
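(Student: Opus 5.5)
The plan is to build the factorization in state space. Take a minimal realization $\Pi(s) = C_\Pi (sI - A_\Pi)^{-1} B_\Pi + D_\Pi$. Since $\Pi = \Pi^{\thicksim}$ and $\Pi$ has no poles on the imaginary axis, I would split $\Pi$ into stable and anti-stable parts. This gives the standard form
\[
\Pi(s) = \begin{bmatrix} (sI-A)^{-1}B \\ I \end{bmatrix}^{\thicksim} M \begin{bmatrix} (sI-A)^{-1}B \\ I \end{bmatrix}, \qquad M = M^T,
\]
with $A$ Hurwitz. The problem then reduces to an algebraic Riccati equation (ARE) built from $(A,B,M)$. A stabilizing solution $Y$ of this ARE yields a square factor $\Psi(s) = W^{-1}\big(\ldots\big)$, and the ARE identity gives $\Pi = \Psi^{\thicksim} W \Psi$ directly.

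\textbf{Step 1: obtain a $J$-spectral factorization.} The condition $\Pi_{11}>0$, $\Pi_{22}<0$ on $\mathbf{R}\cup\{\infty\}$ gives two things. First, at $\omega=\infty$ the matrix $D_\Pi = \Pi(\infty)$ has inertia $(m_1,m_2)$ and is nonsingular. Second, $\Pi(j\omega)$ is nonsingular for all $\omega$: if $\Pi v = 0$ with $v=(v_1,v_2)$, then
\[
v_1^*\Pi_{11}v_1 + 2\,\mathrm{Re}\,(v_1^*\Pi_{12}v_2) + v_2^*\Pi_{22}v_2 = 0 .
\]
Combining the two block rows, $v_1^*\Pi_{11}v_1 = -v_1^*\Pi_{12}v_2$ and $v_2^*\Pi_{22}v_2 = -v_2^*\Pi_{12}^*v_1$. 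Taking real parts forces $v_1^*\Pi_{11}v_1 = v_2^*\Pi_{22}v_2$, so both vanish and $v=0$. Hence the Hamiltonian associated with the ARE has no imaginary-axis eigenvalues. Constant inertia along the imaginary axis then ensures the stable invariant subspace is complementary to the appropriate coordinate subspace, so a stabilizing solution exists. I would cite the standard $J$-spectral factorization theorem for this.

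\textbf{Step 2: show the factorization is hard.} I would exploit the structure of $W = \mathrm{diag}(X_1,-X_2)$. One route avoids Step 1 altogether by factoring in two stages. Write
\[
\Pi = \begin{bmatrix} I & 0 \\ \Pi_{12}^{\thicksim}\Pi_{11}^{-1} & I \end{bmatrix} \begin{bmatrix} \Pi_{11} & 0 \\ 0 & \Pi_{22}-\Pi_{12}^{\thicksim}\Pi_{11}^{-1}\Pi_{12} \end{bmatrix} \begin{bmatrix} I & \Pi_{11}^{-1}\Pi_{12} \\ 0 & I \end{bmatrix},
\]
where the Schur complement is negative definite on the axis. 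Spectral factorization of the positive part, $\Pi_{11} = G_1^{\thicksim}X_1G_1$, and of the negative part, $-S = G_2^{\thicksim}X_2G_2$, with $G_i, G_i^{-1}$ stable, does not immediately give a stable $\Psi$, because $\Pi_{11}^{-1}\Pi_{12}$ may be unstable. I would repair this with a stable/anti-stable correction, which amounts to solving the same Riccati equation. The hardness argument then runs as follows. With $z = \Psi(v,w)$, the quantity $\int_0^T z^TWz\,dt$ equals a quadratic storage term $x(T)^T Y x(T)$ plus the frequency-domain integral, and $Y$ has a sign that can be controlled. In the form used by \cite{Sei.TAC14}, $\int_0^T z^TWz \ge \int_0^\infty$ of the soft IQC when $Y \le 0$.

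\textbf{Main obstacle.} The hardest step is proving this sign property of the stabilizing Riccati solution. I would first try a comparison argument. The $(1,1)$-positivity implies that the ARE restricted to the $w=0$ channel is a standard KYP/spectral factorization with nonnegative solution. The $(2,2)$-negativity gives the opposite bound on the $v=0$ channel. Together these should show that the finite-horizon IQC on truncated signals dominates the infinite-horizon one, which is the defining property of a hard factorization. If the direct sign argument fails, I would fall back on the truncation argument in \cite{Sei.TAC14}: extend $(v_T,w_T)$ beyond $T$ optimally, and use $\Pi_{22}<0$ to show the tail contribution is nonpositive.
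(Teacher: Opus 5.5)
The paper gives no proof of this lemma; it is quoted from \cite{Sei.TAC14}. Your proposal therefore has to stand on its own, and it has two genuine gaps.

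\textbf{Existence.} Your argument that $\Pi(j\omega)$ is nonsingular with inertia $(m_1,m_2)$ on $\mathbf{R}\cup\{\infty\}$ is correct. The gap is the next step, where you treat this as implying that the stable invariant subspace of the Hamiltonian is complementary. It does not. Nonsingularity on the extended axis (equivalently, constant inertia, or no imaginary-axis Hamiltonian eigenvalues) is necessary but not sufficient. The standard $J$-spectral factorization theorem lists existence of a stabilizing Riccati solution as a separate condition.

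For example, take $\Pi=\begin{bmatrix}0&\beta\\ \beta^{\thicksim}&0\end{bmatrix}$ with $\beta(s)=\frac{s-1}{s+1}$. It has eigenvalues $\pm1$ at every $\omega\in\mathbf{R}\cup\{\infty\}$, yet it has no $J$-spectral factorization. Such a factorization would make the Toeplitz operator $f\mapsto P_+(\Pi f)$ on $H_2$ invertible. But its first component, $f_2\mapsto\beta f_2$, only produces functions vanishing at $s=1$.

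The hypotheses $\Pi_{11}>0$, $\Pi_{22}<0$ must be used exactly at this step. Take $f=(f_1,f_2)\in H_2$ and $S=\mathrm{diag}(I,-I)$, so that $Sf\in H_2$. The cross terms are purely imaginary, hence
\[
\mathrm{Re}\,\langle P_+(\Pi f),Sf\rangle=\langle\Pi_{11}f_1,f_1\rangle-\langle\Pi_{22}f_2,f_2\rangle\ge\epsilon\|f\|^2 .
\]
So the self-adjoint Toeplitz operator is bounded below, hence invertible. This gives the canonical factorization, i.e.\ the stabilizing Riccati solution. The constant middle factor inherits the inertia of $\Pi(\infty)$, so it can be normalized to $\mathrm{diag}(X_1,-X_2)$.

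\textbf{Hardness.} Here you list possible routes rather than give an argument, and the mechanism you sketch does not work. The soft IQC is available only for pairs $(\tilde v,\mathcal{S}(\tilde v))$ with $\tilde v\in L_{2+}$. The truncated pair $(v_T,w_T)$ is not such a pair: causality only guarantees that $\mathcal{S}(v_T)$ agrees with $w$ on $[0,T]$. So you may choose only the future of $v$, and you must show the tail $\int_T^\infty\tilde z^TW\tilde z\,dt$ is nonpositive whatever $\mathcal{S}$ returns.

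With the zero extension, no sign property of a Riccati solution can deliver this. Take $\Psi=\mathrm{diag}(\psi I,I)$, as for the paper's own multipliers, and $\mathcal{S}=0$. The tail is then $\int_T^\infty z_1^TXz_1\,dt$, where $z_1$ is the free response of $\psi$ from its state at time $T$. This is positive in general, and $\Pi_{22}<0$ plays no role in it.

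The missing idea is to extend $v$ so that the positive channel is annihilated, i.e.\ $\tilde z_1=0$ on $(T,\infty)$. The tail then equals $-\int_T^\infty\tilde z_2^TX_2\tilde z_2\,dt\le0$. This requires $\Psi_{11}^{-1}\in\mathbf{RH}_\infty$, which you would have to derive from the block-definiteness together with $\Psi^{-1}\in\mathbf{RH}_\infty$.
\begin{itemize}
\item If $\Psi_{12}=0$, take $\tilde v=\Psi_{11}^{-1}\big((\Psi_{11}v)_T\big)$, which equals $v$ on $[0,T]$ by causality.
\item If $\Psi_{12}\neq0$, then $z_1$ depends on $\mathcal{S}(\tilde v)$, so the extension is defined only implicitly, by a feedback equation through $\mathcal{S}$. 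You must show this equation has a solution in $L_{2+}$, for instance by showing that $\Psi$ maps the graph of $\mathcal{S}$ onto the graph of a causal operator $z_1\mapsto z_2$ defined on all of $L_{2+}$.
\end{itemize}
Your proposal does not address this point.
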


IQCs provide a powerful and flexible tool for modeling a wide range of nonlinearities and uncertainties, and have been extensively applied to robust stability and performance analysis of dynamical systems \cite{MegR.TAC97, KaoR.Au07, Kao.TAC12, Sei.TAC14, PfiS.IJRNC15}.
In particular, libraries of IQC multipliers for time-varying delays are available in \cite{KaoR.Au07, PfiS.Au15} for continuous-time systems and \cite{Kao.TAC12} for discrete-time systems.
Due to space limitations, readers are referred to these references for further details.

\section{Problem Formulation}
\label{Sec.Pre}

Consider a class of linear parameter-varying (LPV) systems with time-varying state delays described by
\begin{align}
\label{Plant}
\begin{aligned}
    \dot{x}_p &= A_{p}(\rho) x_p + A_{d}(\rho) \mathcal{D}_{\bar{\tau},r}(x_p) + B_{p1}(\rho) d +  B_{p2}(\rho) u \\
    e &= C_{p1}(\rho) x_p + C_{d1}(\rho) \mathcal{D}_{\bar{\tau},r}(x_p) + D_{p11}(\rho) d +  D_{p12}(\rho) u
\end{aligned},
\end{align}
where $x_p \in \mathbf{R}^{n_x}$ denotes the plant state, $d \in \mathbf{R}^{n_d}$ the disturbance input, $u \in \mathbf{R}^{n_u}$ the control input, and $e \in \mathbf{R}^{n_e}$ the controlled output.
The operator $\mathcal{D}_{\bar{\tau},r}(x_p) := x_p(t - \tau(t))$ represents the time-varying state delay, with the delay function $\tau(t)$ belonging to
\begin{align}
\label{Tauset}
\tau \in \mathcal{T}_{\bar{\tau},r} := \{\tau:\mathbf{R}_+ \rightarrow [0,\bar{\tau}], |\dot{\tau}(t)|\leq r\},
\end{align}
where $\bar{\tau}$ and $r$ denote the maximum delay and its variation bound, respectively.
For notational simplicity, we assume $x_p(t)=0$ for all $t \in [-\bar{\tau},0]$.
For delay-dependent control, the delay $\tau(t)$ is assumed unknown a priori but measurable in real time.
All state-space matrices are continuous functions of the scheduling parameter
$\rho \in {\cal P} \subset {\bf R}^s$, where $\mathcal{P}$ is compact.
The parameter variation rate satisfies
\[
{\cal V} = \left\{ \nu: \underline{\nu}_k \leq \dot{\rho}_k \leq \bar{\nu}_k, k \in {\bf I}[1,s] \right\},
\]
with $\mathcal{V}$ being a convex polytope containing the origin.

To facilitate stability analysis and controller synthesis, we introduce a model transformation that separates the state-delay nonlinearity from the nominal dynamics. Specifically, system (\ref{Plant}) can be rewritten as the interconnected system
\begin{align}
\label{Plant2}
\begin{aligned}
    \mathcal{P}_{nom}: \quad \begin{bmatrix}
        \dot{x}_p \\
        e
    \end{bmatrix} &= \begin{bmatrix}
        A_{p}(\rho)+A_{d}(\rho) & -A_{d}(\rho) & B_{p1}(\rho) & B_{p2}(\rho) \\
        C_{p1}(\rho)+C_{d1}(\rho) & -C_{d1}(\rho) & D_{p11}(\rho) & D_{p12}(\rho)
    \end{bmatrix} \begin{bmatrix}
        x_p \\
        w \\
        d \\
        u
    \end{bmatrix}, \\
    w &= \mathcal{S}_{\bar{\tau},r}(x_p) := x_p - \mathcal{D}_{\bar{\tau},r}(x_p).
\end{aligned}
\end{align}
This representation consists of a delay-free nominal subsystem $\mathcal{P}{nom}$ interconnected with the delay-induced operator $\mathcal{S}{\bar{\tau},r}(\cdot)$.
Such LFT-based representations are standard in robust control theory \cite{ZhouDG.B96, LiGY.TCYB11}.
The input–output behavior of $\mathcal{S}_{\bar{\tau},r}$ will be characterized using dynamic IQCs.


This paper addresses the $\mathcal{H}_\infty$ delay control problem for LPV system (\ref{Plant2}).
The objective is to design a state-feedback controller ensuring $\| e \|_2 < \gamma \| d \|_2$
under zero initial conditions, for all admissible $(\rho,\dot{\rho}) \in \mathcal{P}\times\mathcal{V}$ and $\tau(t) \in \mathcal{T}_{\bar{\tau},r}$.
More specifically, we will solve the $\mathcal{H}_\infty$ synthesis problem for LPV system (\ref{Plant2}) by employing dynamic IQCs to characterize the input-output behavior of the time-varying delay nonlinearity.

For this purpose, we impose the following assumptions.
\begin{assumption}
\label{Ass1}
The pair $(A_p(\rho)+A_d(\rho),B_{p2}(\rho))$ is parametrically stabilizable.
\end{assumption}

\begin{assumption}
\label{Ass2}
The operator $\mathcal{S}_{\bar{\tau},r}$ satisfies IQCs defined by multipliers $\{\Pi_k\}_{k=1}^{N_\lambda} \in \mathbf{RL}_\infty^{2n_x \times 2n_x}$, where the multipliers $\{\Pi_k\}_{k=1}^{N_\lambda}$ with strict definiteness properties ensuring the existence of $J$-spectral factorizations $(\Psi_k, W_k)$. Specifically,
$(\Psi_k, W_k)$ is in the form of $\Psi_k = \begin{bmatrix} \Psi_{11,k} & \Psi_{12,k} \\ 0 & I_{n_x} \end{bmatrix}\in\mathbf{RH}_\infty^{(n_x + n_x) \times (n_x + n_x)}$ and $W_k = \begin{bmatrix}X_k & 0 \\ 0 & -X_k \end{bmatrix}\in\mathbf{R}^{(n_x + n_x) \times (n_x + n_x)}$, where $X_k\in\mathbf{S}^{n_x}_+$.
\end{assumption}

Assumption \ref{Ass1} is standard and guarantees stabilizability of the delay-free system.
Assumption \ref{Ass2} is nonrestrictive and adopted for analytical convenience.

For each multiplier, the associated IQC filter is realized as
\begin{align}
\label{IQCsystem}
\begin{bmatrix}
    \dot{x}_{\psi} \\
    z_{k}
\end{bmatrix} &= \begin{bmatrix}
    A_{\psi} & B_{\psi1} & B_{\psi2} \\
    C_{\psi,k} & D_{\psi1,k} & D_{\psi2,k}
\end{bmatrix} \begin{bmatrix}
    x_{\psi} \\
    x_p \\
    w
\end{bmatrix}, \quad k \in {\bf I}[1,N_\lambda]
\end{align}
where $x_\psi \in \mathbf{R}^{n_\psi}$ denotes the state vector of the IQC-induced operator $\{\Psi_k\}_{k=1}^{N_\lambda}$, with zero initial condition $x_\psi(0) = 0$.
The signals $z_k \in \mathbf{R}^{n_z}$ with $n_z = 2 n_x$ for all $k \in {\bf I}[1,N_\lambda]$, represent the corresponding operator outputs.
Under Assumption \ref{Ass2}, the associated output matrices possess the following structure:
\begin{align*}
C_{\psi,k} &= \begin{bmatrix}
    \bar{C}_{\psi,k} \\
    0
\end{bmatrix}, \quad
D_{\psi1,k} = \begin{bmatrix}
    \bar{D}_{\psi1,k} \\
    0
\end{bmatrix}, \quad
D_{\psi2,k} = \begin{bmatrix}
    \bar{D}_{\psi2,k} \\
    I_{n_x}
\end{bmatrix}, \quad k \in {\bf I}[1,N_\lambda].
\end{align*}

For state-feedback control, we assume that all relevant state information, including the plant state $x_p$ and the IQC-induced system state $x_\psi$ in (\ref{IQCsystem}), is available for feedback.
It should be noted that the IQC-induced LTI system (\ref{IQCsystem}) is introduced solely for robustness analysis and controller synthesis.
Once the realization of (\ref{IQCsystem}) is fixed, the corresponding state $x_\psi$ can be computed online, since the input signals $x_p$ and $w = \mathcal{S}_{\bar{\tau},r}(x_p)$ are available in real time.
Therefore, we propose the delay-dependent state-feedback controller in LPV form:
\begin{align}
\label{Controller}
u = F_c(\rho) \begin{bmatrix}
    x_p \\
    x_{\psi}
\end{bmatrix} + H_c(\rho) \mathcal{S}_{\bar{\tau},r}(x_p),
\end{align}
where $F_c(\rho) \in \mathbf{R}^{n_u \times (n_x+n_\psi)}$ and $H_c(\rho) \in \mathbf{R}^{n_u \times n_x}$ are controller gain matrices to be designed.
This controller explicitly incorporates delay-dependent dynamics.
In particular, the proposed delay-compensating mechanism utilizes the past state information of the plant $x_p$ over the interval $[t-\bar{\tau}, t]$.
This structure enables the controller gains to be scheduled online based on real-time measurements of the delay parameter $\tau(t)$ and parameter $\rho$.

The resulting closed-loop system by interconnecting the controlled plant (\ref{Plant2}), the IQC-induced system (\ref{IQCsystem}) and the controller (\ref{Controller}) is given by
\begin{align}
\label{CL}
\begin{aligned}
    \begin{bmatrix}
    \dot{x}_{cl} \\
    z_k \\
    e
    \end{bmatrix} &= \begin{bmatrix}
    A_{cl}(\rho) & B_{cl1}(\rho) & B_{cl2}(\rho) \\
    C_{cl1,k} & D_{cl11,k} & D_{cl12,k} \\
    C_{cl2}(\rho) & D_{cl21}(\rho) & D_{cl22}(\rho)
    \end{bmatrix} \begin{bmatrix}
    x_{cl} \\
    w \\
    d
    \end{bmatrix}, \\
    w &= \mathcal{S}_{\bar{\tau},r}(x_p),
\end{aligned}
\end{align}
where $x_{cl}:=[x_p^T \ \ x_{\psi}^T]^T$ and
\begin{align}
\label{OF-CloseLoop}
\begin{bmatrix}
A_{cl}(\rho) & B_{cl1}(\rho) & B_{cl2}(\rho) \\
C_{cl1,k} & D_{cl11,k} & D_{cl12,k} \\
C_{cl2}(\rho) & D_{cl21}(\rho) & D_{cl22}(\rho)
\end{bmatrix}
&= \left[ \begin{array}{cc:cc}
A_p(\rho)+A_d(\rho) & 0 & -A_d(\rho) & B_{p1}(\rho) \\
B_{\psi1} & A_{\psi} & B_{\psi2} & 0 \\ \hdashline
D_{\psi1,k} & C_{\psi,k} & D_{\psi2,k} & 0 \\
C_{p1}(\rho)+C_{d1}(\rho) & 0 & -C_{d1}(\rho) & D_{p11}(\rho)
\end{array} \right] \nonumber \\
& \hspace*{1.0in} + \left[ \begin{array}{c}
    B_{p2}(\rho) \\
    0 \\ \hdashline
    0 \\
    D_{p12}
\end{array} \right] \begin{bmatrix}
    F_c(\rho) & H_c(\rho)
\end{bmatrix} \begin{bmatrix}
    I & 0 & 0 \\
    0 & I & 0
\end{bmatrix} \\
& \hspace*{-1.5in} := \begin{bmatrix}
A_{aug}(\rho) & B_{aug0}(\rho) & B_{aug1}(\rho) \\
C_{aug0} & D_{aug00} & 0 \\
C_{aug1}(\rho) & D_{aug10}(\rho) & D_{aug11}(\rho)
\end{bmatrix} + \begin{bmatrix}
B_{aug2}(\rho) \\
0 \\
D_{aug12}(\rho)
\end{bmatrix} \begin{bmatrix}
    F_c(\rho) & H_c(\rho)
\end{bmatrix} \begin{bmatrix}
    I & 0 & 0 \\
    0 & I & 0
\end{bmatrix}. \nonumber
\end{align}
In particular, for all $k \in {\bf I}[1,N_\lambda]$, we have
\begin{align*}
C_{cl1,k} &:= \begin{bmatrix}
    \bar{C}_{cl1,k} \\
    0
\end{bmatrix} = \begin{bmatrix}
    \bar{D}_{\psi1,k} & \bar{C}_{\psi,k} \\
    0 & 0
\end{bmatrix}, \\
\begin{bmatrix}
    D_{cl11,k} & D_{cl12,k}
\end{bmatrix} &:= \begin{bmatrix}
    \bar{D}_{cl11,k} & \bar{D}_{cl12,k}
\\
    I_{n_x} & 0
\end{bmatrix} = \begin{bmatrix}
    \bar{D}_{\psi2,k} & 0 \\
    I_{n_x} & 0
\end{bmatrix}.
\end{align*}

We now recall the bounded real lemma for robust  stability and $\mathcal{L}_2$ gain performance under IQCs \cite{PfiS.IJRNC15}.

\begin{theorem}
\label{Ana-Thm}
Consider the closed-loop system (\ref{CL}).
If there exist parameter-dependent matrices $P(\rho) \in \mathbf{S}_+^{n_x+n_\psi}$, $X_k(\rho) \in \mathbf{S}_+^{n_x}$ for all $k \in {\bf I}[1,N_\lambda]$, and scalar $\gamma > 0$ satisfying the following LMI
\begin{align}
\label{Ana-Thm-LMI}
\begin{aligned}
    &\begin{bmatrix}
        \begin{matrix} {\it He} \{P(\rho) A_{cl}(\rho)\} \\ + \sum_{i=1}^s \left\{ \underline{\nu}_i, \bar{\nu}_i \right\} \frac{\partial P}{\partial \rho_i} \end{matrix} & \star & \star \\
        B_{cl1}^T(\rho) P(\rho) & -\sum^{N_\lambda}_{k=1} X_k(\rho) & \star \\
        B_{cl2}^T P & 0 & -\gamma I_{n_d}
    \end{bmatrix} + \sum^{N_\lambda}_{k=1} \begin{bmatrix}
        \bar{C}_{cl1,k}^T \\
        \bar{D}_{cl11,k}^T \\
        \bar{D}_{cl12,k}^T
    \end{bmatrix} X_k(\rho) \begin{bmatrix}
        \bar{C}_{cl1,k} & \bar{D}_{cl11,k} & \bar{D}_{cl12,k}
    \end{bmatrix}  \\
    &\hspace*{0in} + \frac{1}{\gamma}
    \begin{bmatrix}
        C_{cl2}^T(\rho) \\
        D_{cl21}^T(\rho) \\
        D_{cl22}^T(\rho)
    \end{bmatrix} \begin{bmatrix}
        C_{cl2}(\rho) & D_{cl21}(\rho) & D_{cl22}(\rho)
    \end{bmatrix} < 0
\end{aligned}
\end{align}
for all $(\rho, \dot{\rho}) \in {\cal P} \times {\cal V}$,
then the closed-loop system (\ref{CL}) with time-varying state-delay is stable with $\mathcal{L}_2$ gain less than $\gamma$.
\end{theorem}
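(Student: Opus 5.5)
The argument is a standard dissipation-inequality proof in the spirit of \cite{Sei.TAC14,PfiS.IJRNC15}. The storage function is $V(x_{cl},\rho)=x_{cl}^T P(\rho) x_{cl}$ on the closed-loop state $x_{cl}=[x_p^T\ x_\psi^T]^T$, which also carries the IQC filter state, and the hard IQCs of Assumption~\ref{Ass2} absorb the delay operator. One technical point has to be settled first. As written, the LMI (\ref{Ana-Thm-LMI}) depends affinely on $\dot\rho$ through the term $\sum_i \dot\rho_i \partial P/\partial\rho_i$, and $\mathcal V$ is a polytope, so imposing the inequality at the vertices $\{\underline\nu_i,\bar\nu_i\}$ implies it for every $\dot\rho\in\mathcal V$ by convexity. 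A second remark concerns the weight $X_k(\rho)$. For the IQC to be valid it has to be the factor weight in Assumption~\ref{Ass2}. If it is allowed to vary with $\rho$, I would treat it as constant; otherwise the standard trick is to require the IQC with that particular weight, and I will assume this.

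Next I use the structure of $C_{cl1,k},D_{cl11,k},D_{cl12,k}$. The second block row of $z_k$ is exactly $w$, so $z_k^T W_k z_k=\bar z_k^T X_k \bar z_k - w^T X_k w$, where $\bar z_k=\bar C_{cl1,k}x_{cl}+\bar D_{cl11,k}w+\bar D_{cl12,k}d$. Pre- and post-multiplying (\ref{Ana-Thm-LMI}) by $\eta=[x_{cl}^T\ w^T\ d^T]^T$ along any trajectory of (\ref{CL}) gives
\begin{align*}
\dot V + \sum_{k=1}^{N_\lambda} z_k^T W_k z_k + \frac{1}{\gamma} e^T e - \gamma d^T d \le -\epsilon \|\eta\|^2 ,
\end{align*}
with $\epsilon>0$ coming from strictness of the LMI and compactness of $\mathcal P$. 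To see this, note that the $(1,1)$ block together with the $P B_{cl1}$ and $P B_{cl2}$ cross terms produces $\dot V = x_{cl}^T(\mathrm{He}\{PA_{cl}\}+\dot P)x_{cl}+2x_{cl}^T P(B_{cl1}w+B_{cl2}d)$. The $-\sum_k X_k$ block supplies $-\sum_k w^TX_kw$, the sum term supplies $\sum_k \bar z_k^T X_k \bar z_k$, and the last term supplies $e^Te/\gamma$.

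I then integrate from $0$ to $T$. Zero initial conditions give $V(0)=0$ (the plant history is zero and $x_\psi(0)=0$), and $V(T)\ge 0$. The hard IQC (\ref{Def-hardIQC-ine}) holds for each $k$ because $w=\mathcal S_{\bar\tau,r}(x_p)$. Together these yield $\int_0^T(\frac1\gamma e^Te-\gamma d^Td)\,dt\le -\epsilon\int_0^T\|\eta\|^2dt\le 0$. Letting $T\to\infty$ gives $\|e\|_2\le\gamma\|d\|_2$, and the $\epsilon$-margin upgrades this to a strict inequality.

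The main obstacle is stability itself, meaning well-posedness together with $x_{cl}\in L_2$ for $d\in L_2$ and internal stability with nonzero initial data. Here I would again use the $\epsilon$ term: it gives $\epsilon\|x_{cl}\|_2^2\le\gamma\|d\|_2^2$, hence bounded gain from $d$ to all internal signals. Since $\mathcal S_{\bar\tau,r}$ is bounded causal, this implies input–output stability of the interconnection. For the zero-input case, $V$ nonincreasing up to the IQC terms suffices, following \cite{Sei.TAC14}, whose hard-factorization argument I would invoke rather than re-derive.
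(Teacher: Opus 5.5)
Your argument is correct and is exactly the dissipation-inequality proof behind this bounded real lemma: the storage $x_{cl}^TP(\rho)x_{cl}$, the identity $z_k^TW_kz_k=\bar z_k^TX_k\bar z_k-w^TX_kw$, integration over $[0,T]$ with the hard IQCs, and the $\epsilon$-margin for strictness and for bounding internal signals. The paper does not prove the theorem itself but simply recalls it from \cite{PfiS.IJRNC15}, whose proof follows this same route. Your worry about $X_k(\rho)$ is well founded and should be resolved in favour of constant weights: Assumption~\ref{Ass2} and the cited result give the hard IQC only for fixed $X_k$, and a positive time-varying weight $X_k(\rho(t))$ can destroy $\int_0^T z_k^TW_kz_k\,dt\ge 0$, since the integrand is not pointwise nonnegative and $\mathcal{S}_{\bar\tau,r}$ commutes with constant matrix gains but not with time-varying ones. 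So your ``require the IQC with that weight'' option is an extra hypothesis on the delay operator rather than a standard trick, and what you (and the cited reference) actually establish is the theorem with $\rho$-independent $X_k$.
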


\section{Delayed-Dependent Control Synthesis}
\label{Sec.Syn}

Based on the analysis results in Section \ref{Sec.Pre}, we now address the $\mathcal{H}_\infty$ synthesis problem for the delay-dependent controller (\ref{Controller}). The synthesis conditions are formulated in terms of parameter-dependent LMIs, as stated in the following theorem.

\begin{theorem}
\label{Syn-Thm}
Consider the state-delayed LPV plant (\ref{Plant}). Suppose there exist positive-definite matrices
$R(\rho) \in \mathbf{S}_+^{n_x+n_\psi}$, $\hat{X}_k(\rho) \in \mathbf{S}_+^{n_x}$ for all $k\in{\bf I}[1,N_\lambda]$, a rectangular matrix $\hat{X}(\rho) \in \mathbf{R}^{n_x \times n_x}$,
and a scalar $\gamma > 0$ such that, for all $\rho \in {\cal P}$ the following LMIs hold.
\begin{align}
& {\cal N}_R^T(\rho) \begin{bmatrix}
    \begin{matrix} {\it He} \left\{ \begin{bmatrix}
        A_{p}(\rho) +A_{d}(\rho) & 0 \\
        B_{\psi1} & A_{\psi}
    \end{bmatrix} R(\rho) \right\} \\ - \left\{ \underline{\nu}_i, \bar{\nu}_i \right\} \frac{\partial R}{\partial \rho} \end{matrix} & \star & \star & \star & \star \\
    \hat{X}^T(\rho) \begin{bmatrix}
        -A_{d}^T(\rho) & B_{\psi2}^T
    \end{bmatrix} & -\displaystyle{\sum^{N_\lambda}_{k=1}} (\hat{X}(\rho) + \hat{X}^T(\rho) - \hat{X}_k(\rho)) & \star & \star & \star \\
    \begin{bmatrix}
        B_{p1}^T(\rho) & 0
    \end{bmatrix} & 0 & -\gamma I_{n_d} & \star & \star \\
    \hat{\Upsilon}_{41}(\rho) & \hat{\Upsilon}_{42}(\rho) & 0 & -\hat{\Lambda}(\rho) & \star \\
    \begin{bmatrix}
        C_{p1}(\rho) + C_{d1}(\rho) & 0
    \end{bmatrix} R(\rho) & -C_{d1}(\rho) \hat{X}(\rho) & D_{p11}(\rho) & 0 & -\gamma I_{n_e}
\end{bmatrix} {\cal N}_R(\rho) \nonumber \\
&= {\cal N}_R^T(\rho) \begin{bmatrix}
    {\it He} \left\{ A_{aug}(\rho) R(\rho) \right\} - \left\{ \underline{\nu}_i, \bar{\nu}_i \right\} \frac{\partial R}{\partial \rho} & \star & \star & \star & \star \\
    \hat{X}^T(\rho) B_{aug0}^T(\rho) & -\displaystyle{\sum^{N_\lambda}_{k=1}} (\hat{X}(\rho) + \hat{X}^T(\rho) - \hat{X}_k(\rho)) & \star & \star & \star \\
    B_{aug1}^T(\rho) & 0 & -\gamma I_{n_d} & \star & \star \\
    \hat{\Upsilon}_{41}(\rho) & \hat{\Upsilon}_{42}(\rho) & 0 & -\hat{\Lambda}(\rho) & \star \\
    C_{aug1}(\rho) R(\rho) & D_{aug10}(\rho) \hat{X}(\rho) & D_{aug11}(\rho) & 0 & -\gamma I_{n_e}
\end{bmatrix} {\cal N}_R(\rho) \nonumber \\
&< 0, \label{Syn-Thm-LMI1} \\
& \begin{bmatrix}
-\gamma I & \star & \star \\
0 & -\hat{\Lambda} & 0 \\
D_{p11}(\rho) & 0 & -\gamma I
\end{bmatrix} < 0, \label{Syn-Thm-LMI2}
\end{align}
where ${\cal N}_R(\rho) = {\rm Ker} [B_{p2}^T(\rho) \ \ 0 \ \ 0 \ \ 0 \ \ 0 \ \ D_{p12}^T(\rho)] = {\rm Ker} [B_{aug2}^T(\rho) \ \ 0 \ \ 0 \ \ 0 \ \ D_{aug12}^T(\rho)]$ and
\begin{align*}
\hat{\Upsilon}_{41}(\rho) &= \begin{bmatrix}
    \begin{bmatrix}
        \bar{D}_{\psi1,1} & \bar{C}_{\psi1,1}
    \end{bmatrix} R(\rho) \\
    \vdots \\
    \begin{bmatrix}
        \bar{D}_{\psi1,N_\lambda} & \bar{C}_{\psi1,N_\lambda}
    \end{bmatrix} R(\rho)
\end{bmatrix} = C_{aug0} R(\rho), \qquad
\hat{\Upsilon}_{42}(\rho) = \begin{bmatrix}
    \bar{D}_{\psi2,1} \hat{X}(\rho) \\
    \vdots \\
    \bar{D}_{\psi2,N_\lambda} \hat{X}(\rho)
\end{bmatrix} = D_{aug00} \hat{X}(\rho), \\
\hat{\Lambda}(\rho) &= {\it diag}\{\hat{X}_1(\rho), \cdots, \hat{X}_{N_\lambda}(\rho)\}.
\end{align*}
Then, the LPV system (\ref{Plant}) with time-varying state delay is stabilized by the state-feedback controller (\ref{Controller})
and achieves an $\mathcal{L}_2$ gain less than $\gamma$ for all parameter trajectories $(\rho, \dot{\rho}) \in {\cal P} \times {\cal V}$ and all admissible delay $\tau\in \mathcal{T}_{\bar{\tau},r}$.
\end{theorem}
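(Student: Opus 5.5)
We will deduce the statement from Theorem~\ref{Ana-Thm}. The plan is to rewrite the analysis inequality (\ref{Ana-Thm-LMI}) as an LMI that is affine in the controller gains, apply a congruence with the inverse Lyapunov matrix, and then remove the gains with the projection (elimination) lemma.

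\textbf{Step 1 (Schur complements).} Let $\Lambda(\rho)={\it diag}\{X_1(\rho),\dots,X_{N_\lambda}(\rho)\}$. Stack the IQC outputs as $[\bar C_{cl1,k}\ \bar D_{cl11,k}\ \bar D_{cl12,k}]_{k}=[C_{aug0}\ D_{aug00}\ 0]$. Apply Schur complements to the IQC term $\sum_k(\cdot)^T X_k(\cdot)$, written with $\Lambda^{-1}$ in the diagonal, and to the $\frac1\gamma(\cdot)^T(\cdot)$ performance term. Inequality (\ref{Ana-Thm-LMI}) is then equivalent to a $5\times5$ block inequality. Its first diagonal block is ${\it He}\{PA_{cl}\}+\dot P$, and its remaining diagonal blocks are $-\sum_k X_k$, $-\gamma I$, $-\Lambda^{-1}$ and $-\gamma I$. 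Its off-diagonal blocks involve $B_{cl1}^TP$, $B_{cl2}^TP$, $C_{aug0}$, $D_{aug00}$, $C_{cl2}$, $D_{cl21}$ and $D_{cl22}$.

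\textbf{Step 2 (congruence).} Set $R=P^{-1}$, so that $\dot R=-R\dot PR$; this produces the sign change on the $\partial R/\partial\rho$ terms. Multiply by ${\it diag}\{R,\ \hat X,\ I,\ \Lambda^{-1}\ \text{(or }I\text{)},\ I\}$ on the right and its transpose on the left, where $\hat X$ is a free square matrix. The $(2,2)$ block becomes $-\hat X^T(\sum_k X_k)\hat X$. Here we take $\hat X_k:=$ the scaled multipliers; after the congruence the $(4,4)$ block is $-\hat\Lambda$ with $\hat\Lambda={\it diag}\{\hat X_k\}$, and the multipliers enter consistently through $\hat X$. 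The nonconvex $(2,2)$ block is bounded using $-\hat X^TX_k\hat X\le -(\hat X+\hat X^T)+X_k^{-1}$, which follows from $(\hat X-X_k^{-1})^TX_k(\hat X-X_k^{-1})\ge0$. With $\hat X_k:=X_k^{-1}$, this yields the upper bound $-\sum_k(\hat X+\hat X^T-\hat X_k)$ appearing in (\ref{Syn-Thm-LMI1}). Making the IQC blocks match exactly, i.e.\ choosing whether $\Lambda$ or $\Lambda^{-1}$ sits in row 4 so that one obtains $\hat\Upsilon_{41}=C_{aug0}R$ and $\hat\Upsilon_{42}=D_{aug00}\hat X$, is bookkeeping that I would fix at this step. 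The result is a sufficient matrix inequality of the form
\[
\Phi(R,\hat X,\hat X_k,\gamma)+{\it He}\{\mathcal{U}^T K\mathcal{V}\}<0,
\]
where $K=[F_c\ H_c]$, $\mathcal{U}=[B_{aug2}^T\ 0\ 0\ 0\ D_{aug12}^T]$, and $\mathcal{V}={\it diag}\{[I\ 0],\ \ldots\}$ times $(R,\hat X)$ picks out the first two block rows. Since $R$ and $\hat X$ are invertible, $\mathcal{V}$ contains the invertible factor ${\it diag}\{R,\hat X\}$ on its first two block columns.

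\textbf{Step 3 (elimination).} By the projection lemma, a $K$ solving this inequality exists if and only if ${\cal N}_{\mathcal U}^T\Phi{\cal N}_{\mathcal U}<0$ and ${\cal N}_{\mathcal V}^T\Phi{\cal N}_{\mathcal V}<0$. The first condition is (\ref{Syn-Thm-LMI1}) with ${\cal N}_R$. The kernel of $\mathcal V$ consists of vectors that are zero in the $x_{cl}$ and $w$ coordinates, so the second condition reduces to the principal submatrix of $\Phi$ in the $(d,\ \text{row }4,\ e)$ blocks, which is exactly (\ref{Syn-Thm-LMI2}). Given feasibility, one recovers $F_c,H_c$ pointwise in $\rho$; they can be chosen continuously, for example by the explicit formula from the projection lemma. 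Reversing the congruence then gives (\ref{Ana-Thm-LMI}) with $P=R^{-1}$ and $X_k=\hat X_k^{-1}$, and Theorem~\ref{Ana-Thm} yields stability and $\mathcal L_2$ gain less than $\gamma$.

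\textbf{Main obstacle.} I expect the hardest part to be Step 2: handling the coupling between the IQC multipliers $X_k$ and the $w$ channel so that a single slack $\hat X$ linearizes all $N_\lambda$ terms at once. This requires verifying that the bound $-\hat X^TX_k\hat X\le \hat X_k-\hat X-\hat X^T$, together with the chosen placement of $\Lambda^{\pm1}$ in the fourth block, reproduces exactly the displayed blocks $\hat\Upsilon_{41}$ and $\hat\Upsilon_{42}$. If the direct placement fails, I would first try performing the Schur complement of the IQC term with respect to $\Lambda$ rather than $\Lambda^{-1}$, and then scale row 4 by $\hat\Lambda$.
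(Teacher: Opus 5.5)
Your proposal is correct and follows essentially the same route as the paper. The steps match: Schur complements of the IQC and performance terms, the substitutions $R=P^{-1}$ and $\hat X_k=X_k^{-1}$, the slack bound $-\hat X^T X_k\hat X\le \hat X_k-\hat X-\hat X^T$, a congruence, and the Elimination Lemma with the kernel of the selection matrix giving (\ref{Syn-Thm-LMI2}); the paper applies the equivalent form $-\hat X_k^{-1}\le-\hat X^{-T}(\hat X+\hat X^T-\hat X_k)\hat X^{-1}$ before the congruence rather than after, which is immaterial. To settle your bookkeeping: the row-4 congruence factor is just $I$, since with $\hat X_k=X_k^{-1}$ the Schur block $-\Lambda^{-1}$ already equals $-\hat\Lambda$, so ${\it diag}\{R,\hat X,I,I,I\}$ gives $\hat\Upsilon_{41}=C_{aug0}R$ and $\hat\Upsilon_{42}=D_{aug00}\hat X$ exactly; also, the invertibility of $\hat X$ that you assume actually follows from (\ref{Syn-Thm-LMI1}), because the $w$-directions lie in the range of ${\cal N}_R$ and so force $\hat X+\hat X^T>0$.
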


\begin{proof}
From Theorem \ref{Ana-Thm}, the closed-loop system  (\ref{CL}) is stable and achieves an $\mathcal{L}_2$ gain less than $\gamma$ if condition (\ref{Ana-Thm-LMI}) is satisfied.
This directly implies that the state-delayed LPV system (\ref{Plant}) is stabilized with the prescribed performance level.

Under Assumption \ref{Ass2}, substituting the IQC scaling matrices $W_k, k \in {\bf I}[1,N_\lambda]$ into condition (\ref{Ana-Thm-LMI}), and applying the Schur complement
yields condition
\begin{align}
\label{Prf-Syn-LMI}
\begin{bmatrix}
        {\it He}\{P A_{cl}\} + \dot{P} & \star & \star & \star & \star \\
        B_{cl1}^T P & -\sum^{N_\lambda}_{k=1}X_k & \star & \star & \star \\
        B_{cl2}^T P & 0 & -\gamma I_{n_d} & \star & \star \\
        \Upsilon_{41} & \Upsilon_{42} & \Upsilon_{43} & -\Lambda & \star \\
        C_{cl2} & D_{cl21} & D_{cl22} & 0 & -\gamma I_{n_e}
\end{bmatrix} < 0,
\end{align}
where
\begin{align*}
\Upsilon_{41} &= \begin{bmatrix}
    \bar{C}_{cl1,1} \\
    \vdots \\
    \bar{C}_{cl1,N_\lambda}
\end{bmatrix}, \quad
\Upsilon_{42} = \begin{bmatrix}
    \bar{D}_{cl11,1} \\
    \vdots \\
    \bar{D}_{cl11,N_\lambda}
\end{bmatrix}, \quad
\Upsilon_{43} = \begin{bmatrix}
    \bar{D}_{cl12,1} \\
    \vdots \\
    \bar{D}_{cl12,N_\lambda}
\end{bmatrix}, \\
\Lambda &= {\it diag} \left\{ X_1^{-1},\cdots, X_{N_\lambda}^{-1} \right\}.
\end{align*}
Then, define $R = P^{-1}$ and $\hat{X}_k = X_k^{-1}$ for all $k \in {\bf I}[1,N_\lambda]$. Using the inequality  $-\hat{X}_k^{-1} \leq - \hat{X}^{-T}(\hat{X}^T + \hat{X} - \hat{X}_k)\hat{X}^{-1}$, which holds for any non-singular matrix $\hat{X} \in \mathbf{R}^{n_x \times n_x}$, a sufficient condition for (\ref{Prf-Syn-LMI}) is given by
\begin{align}
\label{SF-Prf-Syn-LMI2}
\begin{aligned}
    \begin{bmatrix}
        {\it He}\{R^{-1} A_{cl} \} + \dot{R^{-1}} & \star & \star & \star & \star \\
        B_{cl1}^T R^{-1} & -\hat{X}^{-T}\sum^{N_\lambda}_{k=1} (\hat{X} + \hat{X}^T - \hat{X}_k) \hat{X}^{-1} & \star & \star & \star \\
        B_{cl2}^T R^{-1} & 0 & -\gamma I_{n_d} & \star & \star \\
        \Upsilon_{41} & \Upsilon_{42} & \Upsilon_{43} & -\Lambda & \star \\
        C_{cl2} & D_{cl21} & D_{cl22} & 0 & -\gamma I_{n_e}
    \end{bmatrix} < 0.
\end{aligned}
\end{align}
Rewriting this condition in terms of the augmented system matrices leads to
the following inequality
\begin{align}
& \begin{bmatrix}
        {\it He}\{A_{aug} R\} - \dot{R} & \star & \star & \star & \star \\
        \hat{X}^T B_{aug0}^T & -\sum^{N_\lambda}_{k=1} (\hat{X} + \hat{X}^T - \hat{X}_k) & \star & \star & \star \\
        B_{aug1}^T & 0 & -\gamma I_{n_d} & \star & \star \\
        C_{aug0} R & D_{aug00} \hat{X} & 0 & -\hat{\Lambda} & \star \\
        C_{aug1} R & D_{aug10} \hat{X} & D_{aug11} & 0 & -\gamma I_{n_e}
\end{bmatrix} \nonumber \\
& \hspace*{0.5in} + \begin{bmatrix}
    B_{aug2} \\
    0 \\
    0 \\
    0 \\
    D_{aug12}
\end{bmatrix} \begin{bmatrix}
    F_c R & H_c \hat{X}
\end{bmatrix} \begin{bmatrix}
    I & 0 & 0 & 0 & 0 \\
    0 & I & 0 & 0 & 0
\end{bmatrix} + \begin{bmatrix}
    I & 0 \\
    0 & I \\
    0 & 0 \\
    0 & 0 \\
    0 & 0
 \end{bmatrix} \begin{bmatrix}
    R F_c^T \\
    \hat{X}^T H_c^T
    \end{bmatrix} \begin{bmatrix}
    B_{aug2}^T & 0 & 0 & 0 & D_{aug12}^T
    \end{bmatrix} < 0. \label{temp}
\end{align}
Finally, applying the Elimination Lemma \cite{BoyEFB.B94} yields
conditions (\ref{Syn-Thm-LMI1}) and (\ref{Syn-Thm-LMI2}), completing the proof.
\end{proof}

Conditions (\ref{Syn-Thm-LMI1}) and (\ref{Syn-Thm-LMI2}) are parameter-dependent LMIs.
Since $R(\rho)$, $\hat{X}(\rho)$, and $\hat{X}_k(\rho)$ are parameter-dependent matrix
functions, they are approximated using basis function expansions:
\begin{align*}
R(\rho) &= \sum_{i=1}^{n_f} f_i(\rho) R_i, \\
\hat{X}(\rho) &= \sum_{i=1}^{n_g} g_i(\rho) \hat{X}_i, \qquad \hat{X}_k(\rho) = \sum_{i=1}^{n_g} g_i(\rho) \hat{X}_{k,i}
\end{align*}
A finite set of gridding points over the parameter space ${\cal P}$ is then selected.
The control design problem then reduces to the convex optimization problem
\begin{equation}
\label{Opt}
\begin{aligned}
    &\min_{R_i, \hat{X}_i, \hat{X}_{k,i}, \ \forall k \in {\bf I}[1,N_\lambda]} \qquad \gamma \\
    &\hspace*{0.6in} \mbox{s.t.} \qquad \mbox{(\ref{Syn-Thm-LMI1})-(\ref{Syn-Thm-LMI2}).}
\end{aligned}
\end{equation}
Solving (\ref{Opt}) yields the matrix functions $R(\rho)$, $\hat{X}(\rho)$,
and $\hat{X}_k(\rho)$. The controller variables
$\hat{F}_c(\rho) = F_c(\rho) R(\rho)$ and
$\hat{H}_c(\rho) = H_c(\rho) \hat{X}(\rho)$
can then be computed pointwise by enforcing feasibility of (\ref{temp}).
Consequently, the delay-dependent controller gains are recovered as
\[
F_c(\rho) = \hat{F}_c(\rho) R^{-1}(\rho), \qquad H_c(\rho) = \hat{H}_c(\rho) \hat{X}^{-1}(\rho).
\]


\begin{remark}
The derivation of Theorem \ref{Syn-Thm} involves a relaxation from (\ref{Prf-Syn-LMI}) to (\ref{SF-Prf-Syn-LMI2}) using the inequality
$-\hat{X}_k^{-1} \leq - \hat{X}^{-T}(\hat{X}^T + \hat{X} - \hat{X}_k)\hat{X}^{-1}$.
This relaxation may introduce conservatism when multiple IQC multipliers are employed.
However, for the single-multiplier case ($N_\lambda = 1$), the synthesis condition (\ref{Syn-Thm-LMI1}) is equivalent to the analysis condition (\ref{Prf-Syn-LMI}).
Importantly, in both cases, the synthesis conditions are
convex and can be efficiently solved using semidefinite programming \cite{BoyEFB.B94}.
\end{remark}

\section{Numerical Example}
\label{Sec.Ex}

In this section, a numerical example is presented to illustrate the design procedure and performance of the proposed delay-dependent LPV controller. Consider the LPV state-delayed system
\begin{align}
\label{Ex.Plant}
\begin{aligned}
    \dot{x}_p(t) &= \begin{bmatrix}
        0 & 1+\phi \rho(t) \\
        -2 & -3+\sigma\rho(t)
    \end{bmatrix} x_p(t) + \begin{bmatrix}
        \phi \rho(t) & 0.1 \\
        -0.2+\sigma\rho(t) &-0.3
    \end{bmatrix} x_p(t-\tau(t)) \\
    &\hspace*{0.4in} + \begin{bmatrix}
        0.2 \\
        0.2
    \end{bmatrix}d(t) + \begin{bmatrix}
        \phi\rho(t) \\
        0.1 + \sigma\rho(t)
    \end{bmatrix} u(t), \\
    e(t) &= \begin{bmatrix}
        0 & 10 \\
        0 & 0
    \end{bmatrix} x_p(t) + \begin{bmatrix}
        0 \\
        0.1
    \end{bmatrix} u(t),
\end{aligned}
\end{align}
where $\phi = 0.2$, $\sigma = 0.1$, and the time-varying parameter $\rho \in [-1, 1]$.
The delay $\tau(t)$ satisfies $\tau\in \mathcal{T}_{\bar{\tau},r}$. This system has also been studied in \cite{WuG.Au01, ZhaG.MCCA05, BriSL.SCL10}.

For the IQC-based delay-dependent LPV control synthesis, two dynamic IQC multipliers from \cite{KaoR.Au07} could be used to characterize the delay operator $\mathcal{S}_{\bar{\tau},r}(x_p)$:
\begin{align}
\label{Ex.Pi12}
\begin{aligned}
    \Pi_1(s) &= \begin{bmatrix}
        |\phi(s)|^2 X_1 & 0 \\
        0 & -X_1
    \end{bmatrix}, \quad
    \Pi_2(s) = \begin{bmatrix}
        |\varphi(s)|^2 X_2 & 0 \\
        0 & -X_2
    \end{bmatrix}
\end{aligned}
\end{align}
where $X_1,X_2 \in \mathbf{S}_+^{2}$,
\[
\phi(s) = k_1 \left(\frac{\bar{\tau}^2 s^2 + c_1 \bar{\tau} s}{\bar{\tau}^2 s^2 + a_1 \bar{\tau} s + k_1 c_1}\right) + \epsilon, \qquad \varphi(s) = k_2 \left(\frac{\bar{\tau}^2 s^2 + c_2 \bar{\tau} s}{\bar{\tau}^2 s^2 + a_2 \bar{\tau} s + b_2}\right) +\delta,
\]
with $k_1 = 1 + \frac{1}{\sqrt{1-r}}, a_1 = \sqrt{2 k_1 c_1}$, $c_1 < 2 k_1$, $k_2 = \sqrt{\frac{8}{2-r}}, a_2 = \sqrt{6.5 + 2 b_2}$, $b_2 = \sqrt{50}$, $c_2 = \sqrt{12.5}$, and
arbitrarily small $\epsilon, \delta > 0$.
In this example, $c_1 = 1, \epsilon = 10^{-7}$, and $\delta = 0.0001$.

Applying the IQC factorization methods from \cite{Sei.TAC14, PfiS.IJRNC15} yields $J$-spectral factorizations of $\Pi_1$ and $\Pi_2$ in (\ref{Ex.Pi12}):
\begin{align}
\label{Ex.Psi12}
\begin{aligned}
    \Psi_1(s) &= \begin{bmatrix}
        \left( \frac{\frac{k_1(c_1-a_1)}{\bar{\tau}} s - k_1^2 c_1/\bar{\tau}^2}{s^2 + \frac{a_1}{\bar{\tau}} s + k_1 c_1/\bar{\tau}^2} + k_1 + \epsilon\right) I_2 & 0 \\
        0 & I_2
    \end{bmatrix}, \quad
    \Psi_2(s) = \begin{bmatrix}
        \left(\frac{\frac{k_2(c_2-a_2)}{\bar{\tau}} s - k_2 b_2/\bar{\tau}^2}{s^2 + \frac{a_2}{\bar{\tau}} s + b_2/\bar{\tau}^2} + k_2 + \delta\right) I_2 & 0 \\
        0 & I_2
    \end{bmatrix}.
\end{aligned}
\end{align}
The resulting IQC-induced LTI system $\{\Psi_k\}_{k=1}^2$ satisfies Assumption~\ref{Ass2}, and its dynamics can be expressed in the form of (\ref{IQCsystem}) with system matrices:
\begin{align*}
\begin{aligned}
    A_\psi &= \begin{bmatrix}
        \tilde{A}_\psi & 0 \\
        0 & \tilde{A}_\psi
    \end{bmatrix}, \quad
    B_{\psi1} = \begin{bmatrix}
        \tilde{B}_{\psi1} & 0 \\
        0 & \tilde{B}_{\psi1}
    \end{bmatrix}, \quad
    B_{\psi2} = 0, \\
    \bar{C}_{\psi,1} &= \begin{bmatrix}
        \tilde{\bar{C}}_{\psi,1} & 0 \\
        0 & \tilde{\bar{C}}_{\psi,1}
    \end{bmatrix}, \quad
    \bar{D}_{\psi1,1} = \begin{bmatrix}
        k_1 + \epsilon & 0 \\
        0 & k_1 + \epsilon
        \end{bmatrix},  \quad
    \bar{D}_{\psi2,1} = 0, \\
    \bar{C}_{\psi,2} &= \begin{bmatrix}
        \tilde{\bar{C}}_{\psi,2} & 0 \\
        0 & \tilde{\bar{C}}_{\psi,2}
    \end{bmatrix}, \quad
    \bar{D}_{\psi1,2} = \begin{bmatrix}
        k_2+\delta & 0 \\
        0 & k_2+\delta
        \end{bmatrix},  \quad
    \bar{D}_{\psi2,2} = 0,
\end{aligned}
\end{align*}
with
\begin{align*}
\tilde{A}_\psi &= \begin{bmatrix}
    0 & 1 & 0 & 0 \\
    -\frac{k_1c_1}{\bar{\tau}^2} & -\frac{a_2}{\bar{\tau}} & 0 & 0 \\
    0 & 0 & 0 & 1 \\
    0 & 0 & -\frac{b_2}{\bar{\tau}^2} & -\frac{a_2}{\bar{\tau}}
\end{bmatrix}, \quad
\tilde{B}_{\psi1} = \begin{bmatrix}
    0 \\
    1 \\
    0 \\
    1
\end{bmatrix}, \\
\tilde{\bar{C}}_{\psi,1} &= \begin{bmatrix}
    -\frac{k_1^2c_1}{\bar{\tau}^2} & \frac{k_1(c_1-a_1)}{\bar{\tau}} & 0 & 0
\end{bmatrix}, \quad
\tilde{\bar{C}}_{\psi,2} = \begin{bmatrix}
    0 & 0 & -\frac{b_2k_2}{\bar{\tau}^2} & \frac{k_2(c_2-a_2)}{\bar{\tau}}
\end{bmatrix}.
\end{align*}

Based on such a system setup, we will first examine the benefits of employing parameter-dependent Lyapunov functions compared with constant Lyapunov functions.
For $r \leq 0.5$, both IQC multipliers in (\ref{Ex.Pi12}) are used for controller synthesis, whereas for $r > 0.5$, only the multiplier $\Pi_2$ is applied.
The parameter space is gridded by 11 points.
Consistent with the observations in \cite{YuaW.TCYB16}, increasing the delay-derivative bound
$r$ leads to a reduction in the admissible delay margin. Nevertheless, compared with existing approaches \cite{WuG.Au01, ZhaG.MCCA05, BriSL.SCL10}, the proposed IQC-based design consistently yields substantially larger delay margins and remains feasible even in regimes where alternative methods fail.

Beyond enhanced delay tolerance, the proposed framework provides a systematic mechanism for handling delay effects by characterizing the delay dynamics exclusively through selected IQC multipliers. This feature offers improved flexibility, allowing the approach to accommodate various delay characteristics, provided appropriate IQCs are available. Furthermore, the IQC-based design exhibits reduced conservatism, leading to improved
${\cal L}_2$ gain performance across a broad range of delay conditions.

Three design methods are considered for comparison:
\begin{enumerate}
\item
Delay-dependent LPV control using a single quadratic Lyapunov funtion;
\item
LFT-based exact-memory control obtained by treating the scheduling parameter $\rho$ as an uncertainty;
\item
Delay-dependent LPV control using parameter-dependent Lyapunov functions with
\begin{align*}
f_1(\rho) &= 1, \quad f_2(\rho) = \rho, \quad f_2(\rho) = \rho^2 \\
g_1(\rho) &= 1, \quad g_2(\rho) = \rho.
\end{align*}
\end{enumerate}
Since the LFT exact-memory approach does not explicitly utilize real-time parameter information, its performance is slightly inferior to that of the LPV-based controllers. In contrast, parameter-dependent Lyapunov functions, which explicitly capture parameter variations with bounded rates, generally provide less conservative results than constant Lyapunov function methods. In particular, for relatively small parameter variation rates, the parameter-dependent formulation yields noticeable improvements in controlled performance, as clearly illustrated in Table \ref{Tab.gamma}.

\begin{table}[htb]
\centering
\caption{Performance comparison of different delay dependent control methods}
\label{Tab.gamma}
\begin{tabular}{c|p{0.6in}|cccccc} \hline \hline
Method & parameter variation rate $\nu$ & \multicolumn{6}{c}{(Delay derivative $r$, delay bound $\tau$)}  \\ \hline
& & (0, 10) & (0.5, 2.5) & (0.9, 1) & (1.5, 1) & (1.7, 2.5) \\ \hline
Quadratic LF & & 3.6859 & 3.6494 & 1.8506 & 1.9381 & 3.9573 \\
LFT approach \cite{YuaW.TCYB16} & & 3.6864 & 3.6507 & 1.8510 & 1.9383 & 3.9580 \\ \hline
\multirow{5}{1.0in}{Parameter-dependent LF} & 0.1 & 3.4501 & 2.4843 & 1.8472 & 1.8982 & 2.5334 \\
& 0.5 & 3.5783 & 3.1911 & 1.8497 & 1.9287 & 3.3433 \\
& 1 & 3.5903 & 3.2886 & 1.8500 & 1.9311 & 3.4650 \\
& 5 & 3.6068 & 3.4625 & 1.8501 & 1.9337 & 3.6888 \\
& 10 & 3.6117 & 3.5338 & 1.8502 & 1.9344 & 3.7879 \\ \hline \hline
\end{tabular}
\end{table}

The improved performance obtained with parameter-dependent Lyapunov functions can be attributed to their ability to explicitly capture the dependence of the system dynamics on the scheduling parameter. Unlike constant Lyapunov functions, which enforce a uniform stability certificate over the entire parameter set, parameter-dependent Lyapunov functions allow the Lyapunov matrix to vary with $\rho$.
This additional degree of freedom enables a tighter characterization of the system’s stability and performance properties, thereby reducing conservatism.
%
Consequently, the parameter-dependent Lyapunov framework typically results in less conservative stability and performance conditions, explaining the observed improvement in the achievable ${\cal L}_2$ gain levels.
On the other hand, increasing either the maximum delay bound $\bar{\tau}$ or the delay derivative bound $r$ leads to larger values of $\gamma$, indicating degraded performance. 

To illustrate the proposed delay-dependent control strategy via time-domain simulations, we consider the time-varying parameter $\rho(t) = \sin (0.5 t)$ and the state-delay function
$\tau(t) = 0.2 \sin(6 t) + 1.8$, yielding the bounds
$(r, \bar{\tau}) = (1.2, 2)$. Since the delay-derivative bound exceeds unity, conventional methods \cite{WuG.Au01, ZhaG.MCCA05, BriSL.SCL10} are not applicable. In contrast, by solving the optimization problem (\ref{Opt}) with the IQC multiplier $\Pi_2$ and parameter variation bound $\nu = 1.0$, we obtain an optimized ${\cal L}_2$ gain
$\gamma = 2.1137$
along with the corresponding controller gains.
With zero initial conditions and a unit pulse disturbance applied over $[0,2] sec$,
simulations are performed using the synthesized controller. The system responses are depicted in Fig. \ref{Fig.Ex}.
Despite the presence of parameter variations and time-varying state delays, the plant states converge rapidly to zero while maintaining a reasonable control effort.

\begin{figure}[htb!]
\centering
\subfloat[plant states: $x_1$ (solid line), $x_2$ (dash line)]{
\includegraphics[width=0.49\textwidth]{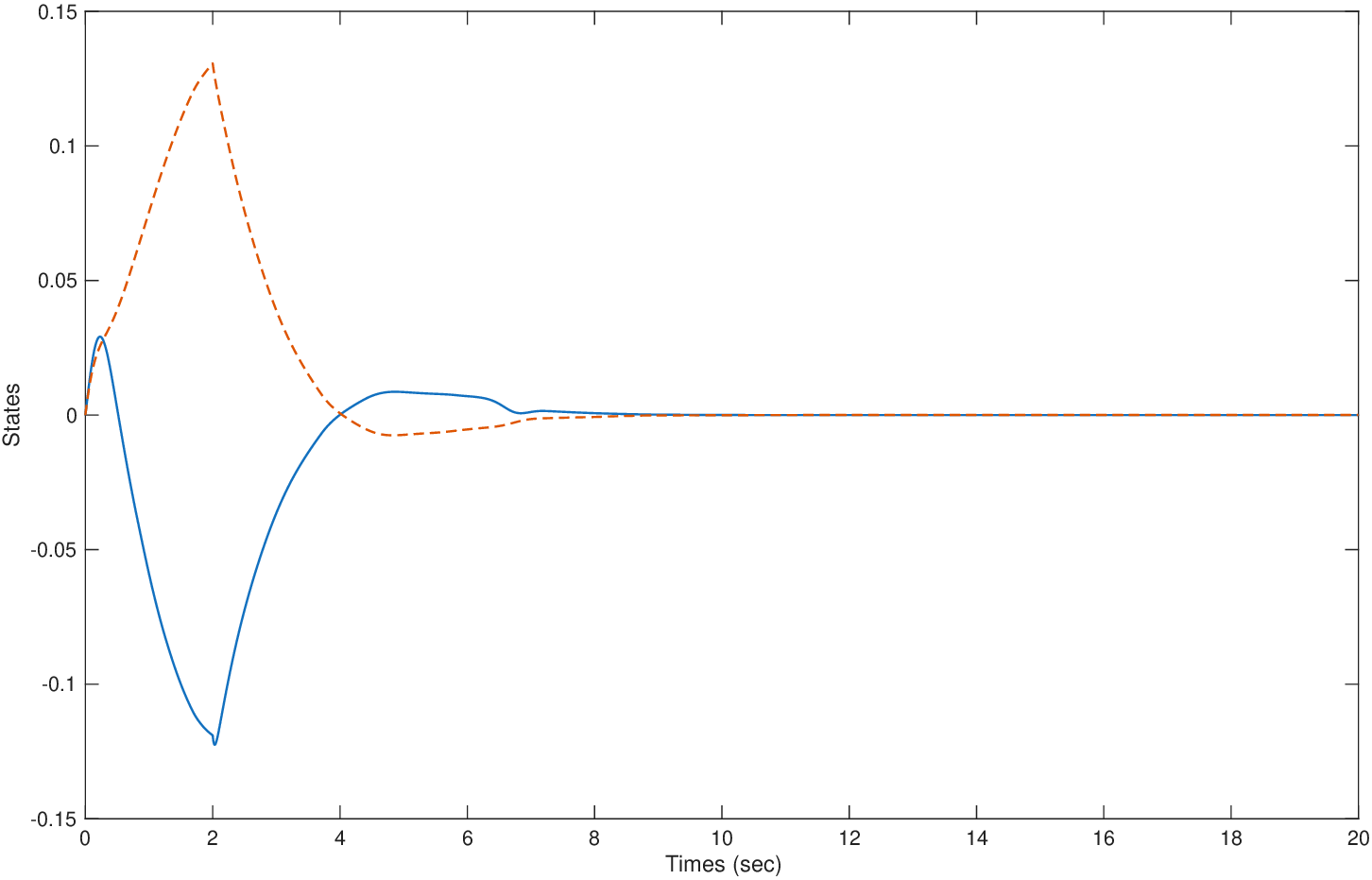}
}
\subfloat[delayed states]{
\includegraphics[width=0.49\textwidth]{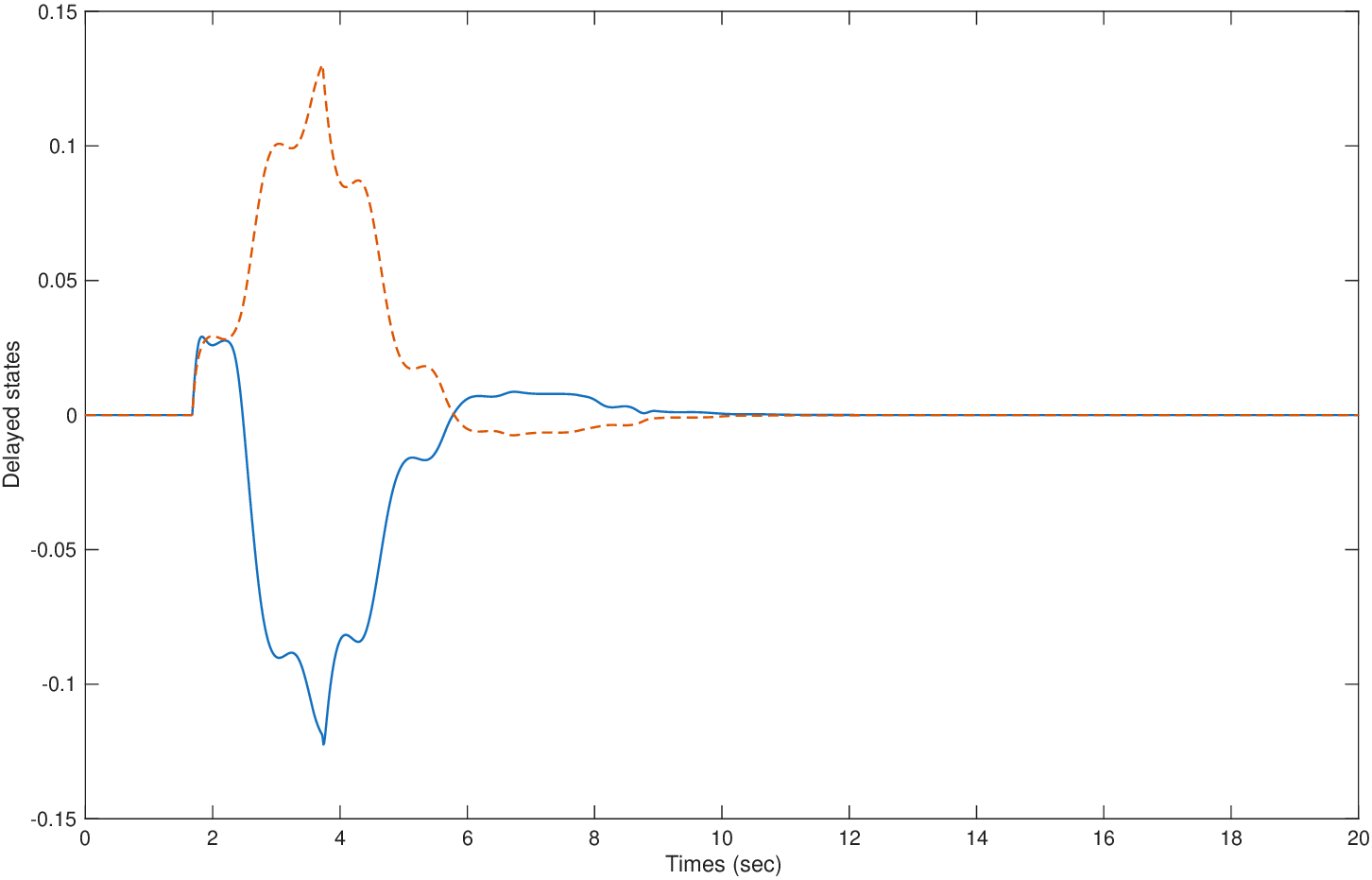}
} \\
\subfloat[control input]{
\includegraphics[width=0.49\textwidth]{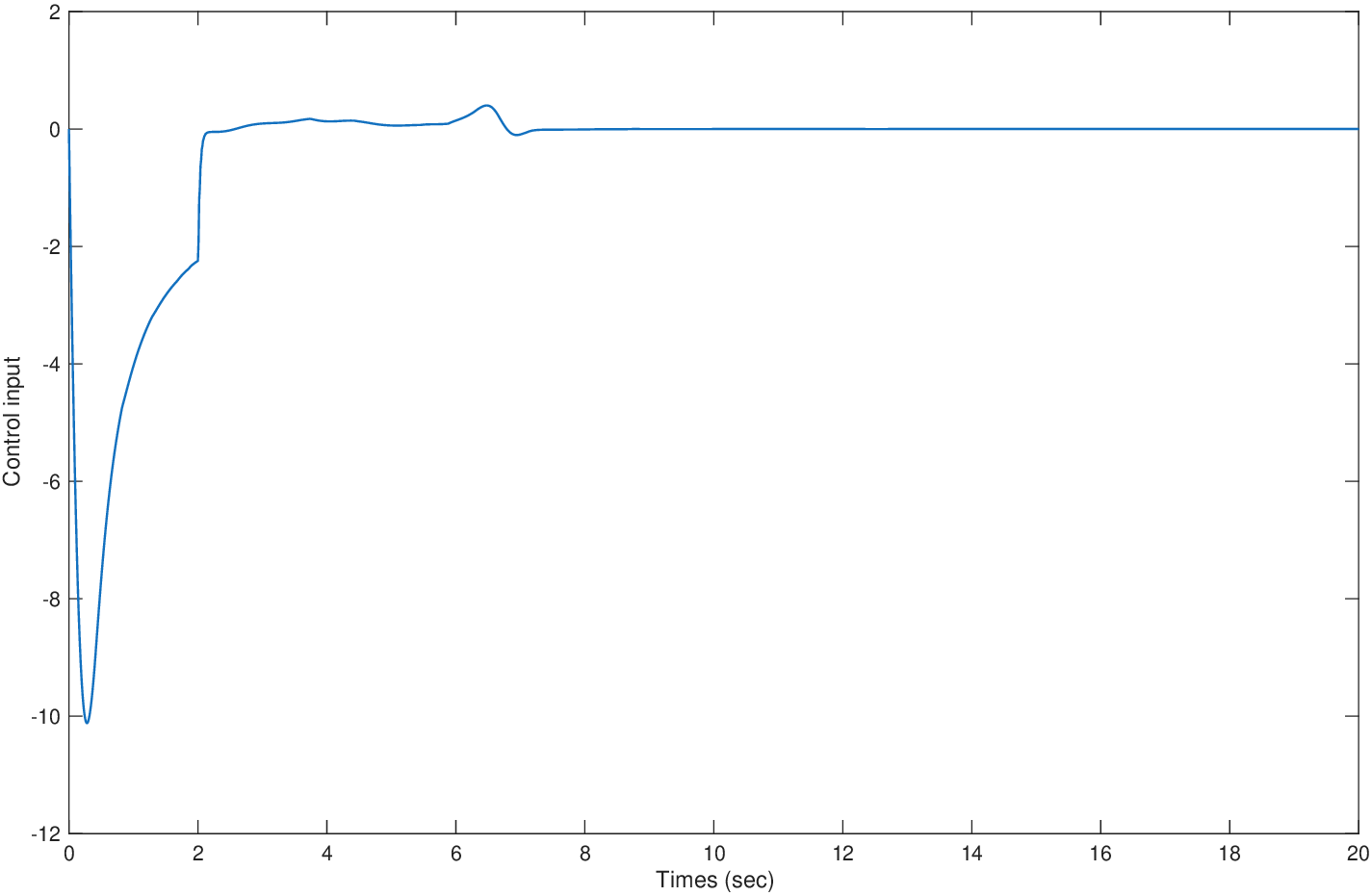}
}
\caption{Simulation results.}
\label{Fig.Ex}
\end{figure}


\section{Concluding Remarks}
\label{Sec.Con}

In this paper, a novel delay-dependent control framework has been developed within the IQC-based robust analysis and synthesis paradigm for LPV systems with time-varying state delays. The proposed controller adopts a state-feedback structure composed of a conventional memoryless component augmented with an additional term that explicitly captures delay-dependent dynamics. Dynamic IQCs are employed to characterize the input–output behavior of the state-delay operator, enabling a unified treatment of robust stability analysis and controller synthesis. Delay-dependent LPV synthesis conditions guaranteeing robust $\mathcal{L}_2$-gain performance are derived in terms of parameter-dependent LMIs, which can be solved efficiently via convex optimization. Numerical studies demonstrate that the proposed delay-dependent control strategy achieves improved closed-loop performance while maintaining a systematic and computationally tractable design procedure.

From a methodological perspective, the paper introduces a delay-dependent state-feedback controller with exact memory for LPV systems with time-varying delays and establishes a unified IQC-based synthesis framework using dynamic multipliers and parameter-dependent Lyapunov functions. Compared with classical delay-dependent approaches based on Lyapunov–Krasovskii functionals, the proposed formulation provides a less conservative and more structured design procedure while retaining favorable computational properties. The developed IQC-based approach offers a simple yet rigorous framework for delay-dependent controller design. Owing to the generality of the IQC formulation, the results can be naturally extended to accommodate alternative performance objectives, such as robust $\mathcal{H}_2$ performance, as well as broader classes of uncertainties and nonlinearities, including parametric uncertainty, unmodeled dynamics, saturation, deadzone, and slope-bounded nonlinearities. An important direction for future work is the application of the proposed methodology to networked control systems, where communication delays and uncertainties are intrinsic features.

\end{document}